\newtheorem{theorem}{Theorem}
\newtheorem{proof}{Proof}
\begin{document}
\begin{frontmatter}

\title{Numerical approximations for population growth model by Rational Chebyshev and Hermite Functions collocation approach: A comparison}

%% use optional labels to link authors explicitly to addresses:

\author[a]{K. Parand\corref{cor}}
\cortext[cor]{Corresponding author. Tel:+98 21 22431653; Fax:+98 21 22431650.}
\ead{k\_parand@sbu.ac.ir}
\fntext[a]{Member of research group of Scientific Computing.}
\author{A. R. Rezaei}
\ead{alireza.rz@gmail.com}
\author{A. Taghavi}
\ead{amirtaghavims2@yahoo.com}
\address{Department of Computer Sciences, Shahid Beheshti University, G.C., Tehran, Iran}

\begin{abstract}
This paper aims to compare rational Chebyshev (RC) and Hermite functions (HF) collocation approach to solve the Volterra's model for population growth of a species within a closed system. This model is a nonlinear integro-differential equation where the integral term represents the effect of toxin. This approach is based on orthogonal functions which will be defined. The collocation method reduces the solution of this problem to the solution of a system of algebraic equations. We also compare these methods with some other numerical results and show that the present approach is applicable for solving nonlinear integro-differential equations.
\end{abstract}

\begin{keyword}
Collocation method; Spectral method; Volterra's population model; Nonlinear integro-differential equation; Hermite functions; Rational Chebyshev; Numerical approximations; Population growth model

\PACS 02.60.Lj, 02.70.Hm, 87.23.Cc
\end{keyword}

\end{frontmatter}

\section{Introduction}
Many science and engineering problems arise in unbounded domains. Different spectral methods have been proposed for solving problems in unbounded domains. The most common method is through the use of polynomials that are orthogonal over unbounded domains, such as the Hermite spectral method and the Laguerre spectral method \cite{Coulaud,Funaro.Kavian,Funaro.Appl. Numer. Math.1990,Guo.Math. Comp.1999,Guo.num2000,Maday,Shen,Siyyam,Iranzo}.

Guo \cite{Guo.J. Math. Anal. Appl.1998,Guo.com2000,Guo.J. Math. Anal. Appl.2000} proposed a method that proceeds by mapping the original problem in an unbounded domain to a problem in a bounded domain, and then using suitable Jacobi polynomials to approximate the resulting problems.

Another approach is replacing infinite domain with $[-L, L]$ and semi-infinite interval with $[0, L]$ by choosing $L$, sufficiently large.
This method is named domain truncation \cite{BoydBook}.

Another effective direct approach for solving such problems is based on rational approximations.
Christov \cite{Christov.SIAM J. Appl. Math.1982} and Boyd \cite{Boyd.J. Comput. Phys.1987(69),Boyd1987} developed some spectral methods on unbounded intervals by using mutually orthogonal systems of rational functions.
Boyd \cite{Boyd1987} defined a new spectral basis, named rational Chebyshev functions on the semi-infinite interval, by mapping to the Chebyshev polynomials.
Guo et al. \cite{Guo.sci2000} introduced a new set of rational Legendre functions which are mutually orthogonal in $L^2(0,+\infty)$.
They applied a spectral scheme using the rational Legendre functions for solving the Korteweg-de Vries equation on the half line.
Boyd et al. \cite{Boyd2003} applied pseudospectral methods on a semi-infinite interval and compared rational Chebyshev, Laguerre and mapped Fourier sine.

Parand et al. \cite{Parand.Appl. Math. Comput.2004,Parand.Int. J. Comput. Math.2004,Parand.Phys. Scripta2004,Parand.Shahini.Phys.let.A,Parand.CAM,Parand.JCP,Parand.Dehghan.Rezaei.CPC} applied spectral method to solve nonlinear ordinary differential equations on semi-infinite intervals. Their approach was based on rational Tau and collocation method.

Among these, an approach consists of using the collocation method or the pseudospectral method based on the nodes of Gauss formulas related to unbounded intervals \cite{Iranzo}.\\

Collocation method has become increasingly popular for solving differential equations also they are very useful in providing highly accurate solutions to differential equations.\\

We aim to compare rational Chebyshev collocation (RCC) approach and Hermite functions collocation (HF) collocation approach to solve a population growth of a species within a closed system.

 This paper is arranged as follows: in subsection \ref{Volterra.Intro}, the Volterra's population model is taken into consideration and some of the traditional methods that solved it are discussed. In sections \ref{RCF.Intro} and \ref{HF.Intro}, we describe the properties of rational Chebyshev and Hermite functions. In section \ref{Voltera.Solve}, the proposed methods are applied to solve Volterra's population model. This equation is first converted to an equivalent nonlinear ordinary differential equation and then our methods are applied to solve this new equation, and then a comparison is made with existing methods that were reported in the literature. The numerical results and advantages of the methods are discussed in the final section.
\subsection{Volterra's Population Model}\label{Volterra.Intro}
The Volterra's model for population growth of a species within a closed system is given in \cite{Scudo,Small} as
\begin{equation}\label{MainVolterra.}
\frac{dp}{d\tilde{t}}=ap-bp^2-cp\int_0^{\tilde{t}}p(x)dx,\ \ \ \
p(0)=p_0,
\end{equation}
where $a > 0$ is the birth rate coefficient, $b > 0$ is the crowding coefficient, and $c > 0$ is the toxicity coefficient. The coefficient $c$ indicates the essential behavior of the population evolution before its level falls to zero in the long term. $p_{0}$ is the initial population, and $p=p(\tilde{t})$ denotes the population at time $\tilde{t}$. \\
This model is a first-order integro-ordinary differential equation where the term $cp\int_0^{\tilde{t}}p(x)dx$ represents the effect of toxin accumulation on the species. We apply scale time and population by introducing the nondimensional variables $$t=\frac{\tilde{t}c}{b}\quad \mbox{and} \quad  u=\frac{pb}{a}, $$ to obtain the nondimensional problem
\begin{equation}\label{VolterraPopulationInt.}
\kappa \frac{du}{dt}=u-u^2-u\int_0^tu(x)dx,\ \ \ \ u(0)=u_0,
\end{equation}
where $u(t)$ is the scaled population of identical individuals at
time $t$, and $\kappa = c/(ab)$ is a prescribed nondimensional
parameter. The only equilibrium solution of Eq. (\ref{VolterraPopulationInt.}) is the trivial
solution $u(t)=0$ and the analytical solution \cite{Tebeest}
$$u(t)=u_0 \thinspace exp(\frac {1}{\kappa}\int_0^t[1-u(\tau)-
\int_0^{\tau}u(x)dx]d
\tau),$$ shows that $u(t)> 0$ for all $t$ if $u_0 > 0$. \\

The solution of  Eq. (\ref{MainVolterra.}) has been of considerable concern. Although
a closed form solution has been achieved in \cite{Scudo,Small}, it was formally shown that the closed form solution cannot lead to any insight into the behavior of the population evolution \cite{Scudo}. In the literature, several numerical solutions for Volterra's population model have been reported. In \cite{Scudo}, the successive approximations method was suggested for the solution of Eq. (\ref{VolterraPopulationInt.}), but was not implemented. In this case, the solution $u(t)$ has a smaller amplitude compared to the amplitude of $u(t)$ for the case $\kappa \ll 1$.

In \cite{Small}, the singular perturbation method for solving Volterra's
population model is considered. The author scaled out the
parameters of Eq. (\ref{MainVolterra.}) as much as possible and considered four
different ways to do this. He considered two cases $\kappa =
c/(ab)$ small and $\kappa = c/(ab)$ large.

It is shown in \cite{Small} that for the case $\kappa \ll 1$, where populations
are weakly sensitive to toxins, a rapid rise occurs along the
logistic curve that will reach a peak and then is followed by a
slow exponential decay. And, for large $\kappa$, where populations
are strongly sensitive to toxins, the solutions are proportional
to $sech^2(t).$

In \cite{Tebeest}, several numerical algorithms namely Euler method, modified Euler method, classical fourth-order Runge-Kutta method and Runge-Kutta-Fehlberg method for the solution of Eq. (\ref{VolterraPopulationInt.}) are obtained. Moreover, a phase-plane analysis is implemented. In \cite{Tebeest}, the numerical results are correlated to give insight on the problem and its solution without using perturbation techniques. However, the performance of the traditional numerical techniques is well known in that it provides grid points only, and in addition, it requires large amounts of calculations.

In \cite{Al-Khaled} Adomian decomposition method and Sinc-Galerkin method were compared for the solution of some
mathematical population growth models.

In \cite{Wazwaz}, the series solution method and the decomposition method are implemented independently to Eq. (\ref{VolterraPopulationInt.}) and to a related nonlinear ordinary differential equation. Furthermore, the Pad\'{e} approximations are used in the analysis to capture the essential behavior of the populations $u(t)$ of identical individuals and approximation of $u_{max}$ and the exact value of $u_{max}$ for different $\kappa$ were compared.

The authors of \cite{Parand.Appl. Math. Comput.2004,Parand.Int. J. Comput. Math.2004,Parand.Phys. Scripta2004} applied spectral method to solve Volterra's population on a semi-infinite interval. This approach is based on a rational Tau method. They obtained the operational matrices of derivative and the product of rational Chebyshev and Legendre functions and then applied these matrices together with the Tau method to reduce the solution of this problem to the solution of a system of algebraic equations.

In \cite{Parand.Hojati} second derivative multistep methods (denoted SDMM) are used to solve Volterra's model. They first converted the model to a nonlinear ordinary differential equation and then the new SDMM applied to solve this equation.

In \cite{Ramezani.Razzaghi} the approach is based upon composite spectral functions approximations. The properties of composite spectral functions consisting of few terms of orthogonal functions are utilized to reduce the solution of the Volterra's model to the solution of a system of algebraic equations.

In \cite{Marzban.Hoseini} a numerical method based on hybrid function approximations was proposed to solve Volterra's population model. These hybrid functions consist of block-pulse and Lagrange-interpolating polynomials.

Momani et al. \cite{Momani.Qaralleh} and Xu \cite{Xu} used a numerical and Analytical algorithm for approximate solutions of a fractional population growth model respectively. The first algorithm is based on Adomian decomposition method (ADM) with Pad\'{e} approximants and the second algorithm is based on homotopy analysis method (HAM).

In total, in recent years, numerous works have been focusing on the development of more advanced and efficient methods for initial value problems especially for stiff systems.
%***********************************************************************************************************************
\section{Rational Chebyshev Functions}\label{RCF.Intro}
This section is devoted to introducing rational Chebyshev functions (which we denote (RC)) and expressing some basic properties of them that will be used to construct the RC collocation (RCC) method.
rational Chebyshev functions denoted by $R_{n}(x)$ are generated from well known
Chebyshev polynomials by using the algebraic mapping $\phi(x) = (x - L)/(x + L)$ \cite{BoydBook,Boyd1987,Parand.Shahini.Phys.let.A,Guo.Shen.Wang2002}
\begin{equation}
R_{n}(x)=T_{n}(\phi(x)),
\end{equation}
where $L$ is a constant parameter and $T_{n}(y)$ is the Chebyshev polynomial of degree $n$.
The constant parameter $L$ sets the length scale of the mapping. Boyd \cite{BoydBook,Boyd1982} offered guidelines for optimizing the map parameter $L$ where $L>0$.
Using properties of Chebyshev polynomials and RC, we have

\begin{align}
\nonumber R_{n}(x)&=\sum_{i=0}^{\lfloor \frac{n}{2}\rfloor}(-1)^{i}2^{n-2i}{{n-i}\choose{i}}\left(\frac{x-L}{x+L}\right)^{n-2i}\\
&=\sum_{i=0}^{\lfloor \frac{n-1}{2}\rfloor}(-1)^{i}2^{n-2i-1}{{n-i-1}\choose{i}}\left(\frac{x-L}{x+L}\right)^{n-2i}
\end{align}
Other properties of RC and a complete discussion on approximating functions by RC are given
in \cite{Parand.Shahini.Phys.let.A,Guo.Shen.Wang2002}.
\subsection{Rational Chebyshev functions approximation}
Let
\begin{equation}
\Re_{N}=span\{R_0,R_1,...,R_N\}.
\end{equation}
We define $P_N:L_{w}^{2}(\Lambda)\rightarrow\Re_{N}$ by
\begin{equation}
P_{N}u(x)=\sum_{k=0}^{N}a_{k}R_{k}(x)
\end{equation}
To obtain the order of convergence of rational Chebyshev approximation,
we define the space
\begin{equation}
H_{w,A}^{r}(\lambda)=\{v:v \text{ is measurable and } \|v\|_{r,x,A}\ <\infty\},
\end{equation}
where the norm is induced by
\begin{equation}
\|v\|_{r,x,A}=\left( \sum_{k=0}^{r}\left\|(x+1)^{\frac{r}{2}+k}\frac{d^{k}}{dx^{k}}v\right\|_{w}^{2} \right)^{\frac{1}{2}},
\end{equation}
and A is the Sturm-Liouville operator as follows:
\begin{equation}
A{v(x)}=-w^{-1}(x)\frac{d}{dx}\left( w^{-1}(x)\frac{d}{dx}v(x)\right).
\end{equation}
$w(x)$ is the weight function and $w(x) = \sqrt{L}/(\sqrt{x}(x +
L))$. We have the following theorem for the convergence:
\begin{theorem}
For any $v\in H_{w,A}^{r}(\Lambda)$ and $r\geq 0$,
\begin{equation}
{\|{P_{N}v-v}\|}_{w}\leq{cN^{-r}\|{v}\|_{r,w,A}}.
\end{equation}
\end{theorem}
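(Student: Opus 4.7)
The plan is to reduce this convergence estimate to the classical error bound for Chebyshev polynomial projections on $[-1,1]$ by exploiting the algebraic mapping $y = \phi(x) = (x-L)/(x+L)$ that already defines the rational Chebyshev basis. Since $R_n(x) = T_n(\phi(x))$ and the weight $w$ is chosen precisely so that the RC system is orthogonal, one should expect that every object on the RC side has a Chebyshev counterpart, and that the convergence rate carries over unchanged.

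First I would perform the change of variables $y = \phi(x)$, which maps $\Lambda = [0,\infty)$ bijectively onto $[-1,1)$ with inverse $x = L(1+y)/(1-y)$ and Jacobian $dx = 2L(1-y)^{-2}\,dy$. A direct computation shows $w(x)\,dx = (1-y^2)^{-1/2}\,dy$, so that for $\tilde v(y) := v(x(y))$ we have $\|v\|_w = \|\tilde v\|_{\tilde w}$ with $\tilde w(y)=(1-y^2)^{-1/2}$ the standard Chebyshev weight. Because $R_k(x) = T_k(y)$ and the inner product is preserved, the projection $P_N v$ transforms into the truncated Chebyshev expansion $\tilde P_N \tilde v = \sum_{k=0}^{N}\hat a_k T_k$ of $\tilde v$, and hence
$$\|P_N v - v\|_w = \|\tilde P_N \tilde v - \tilde v\|_{\tilde w}.$$

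Second, I would verify that the Sturm--Liouville operator $A$ pulls back to the classical Chebyshev operator $\tilde A \tilde v = -\sqrt{1-y^2}\,(d/dy)\bigl(\sqrt{1-y^2}\,d\tilde v/dy\bigr)$. Using $d/dy = ((x+L)^2/(2L))\,d/dx$ together with $\tilde w^{-1}(y) = 2\sqrt{Lx}/(x+L)$ and $w^{-1}(x) = \sqrt{x}(x+L)/\sqrt{L}$, one finds $\tilde w^{-1}(d/dy) = w^{-1}(d/dx)$, and iterating this identity gives $\tilde A \tilde v(y) = (Av)(x(y))$. In particular the RC functions inherit the eigenrelation $A R_n = n^2 R_n$ from $\tilde A T_n = n^2 T_n$, which is what makes the Parseval-plus-integration-by-parts argument work.

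Third, I would invoke the standard spectral convergence estimate for Chebyshev projections: for $\tilde v$ in the Chebyshev Sobolev space associated with $\tilde A$,
$$\|\tilde P_N \tilde v - \tilde v\|_{\tilde w} \;\leq\; c\, N^{-r}\,\|\tilde v\|_{r,\tilde w,\tilde A},$$
a classical result proved by expanding $\tilde v$ in Chebyshev series, applying $\tilde A^{r/2}$ termwise to insert a factor $k^{r}$ in each coefficient, and using Parseval on the tail $k>N$.

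The main obstacle is the final bookkeeping step: showing that the weighted norm $\|v\|_{r,w,A}$ in the excerpt is controlled by the Chebyshev--Sobolev norm $\|\tilde v\|_{r,\tilde w,\tilde A}$ under the mapping. Each $k$-th $x$-derivative translates into a combination of $y$-derivatives of order up to $k$ weighted by powers of $(x+L)^{-2} = (1-y)^2/(2L)^2$ coming from the chain rule, while the extrinsic weight factors $(x+L)^{r/2+k}$ (modulo the apparent typo of ``$x+1$'' for ``$x+L$'' in the norm's definition) are designed precisely to cancel these Jacobian powers and recover $(1-y^2)^{-k/2}$-weighted $y$-derivatives, which are exactly the ingredients of $\tilde A^{k/2}\tilde v$ in $L^2_{\tilde w}$. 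Carrying out this cancellation carefully, and then composing with the two previous steps, yields the stated $O(N^{-r})$ bound with a constant $c$ independent of both $v$ and $N$.
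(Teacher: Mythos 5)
The paper does not actually prove this theorem --- its ``proof'' is a one-line citation to Guo, Shen and Wang --- so the only meaningful comparison is with that reference, and your sketch reconstructs essentially its standard argument: pull everything back to $[-1,1]$ via $y=(x-L)/(x+L)$, check that $w(x)\,dx=(1-y^2)^{-1/2}\,dy$ and $w^{-1}(x)\,\tfrac{d}{dx}=\sqrt{1-y^2}\,\tfrac{d}{dy}$ so that the projection and the Sturm--Liouville operator both transport to their classical Chebyshev counterparts, and then run the usual eigenfunction-expansion/Parseval tail estimate. Your two key pullback identities are correct, and the one step you leave unexecuted (the equivalence of $\|A^{r/2}v\|_{w}$ with the explicit weighted-derivative norm defining $\|v\|_{r,w,A}$) is precisely the bookkeeping carried out in the cited reference, so the proposal is sound and consistent with the source the paper defers to.
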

\begin{proof}
A complete proof is given by Guo et al.\cite{Guo.Shen.Wang2002}.
\end{proof}
This theorem shows that the rational Chebyshev approximation has exponential
convergence.

\section{Properties of Hermite Functions}\label{HF.Intro}
In this section, we detail the properties of the Hermite functions (HF) that will be used to construct the Hermite functions collocation (HFC) method.
First we note that the Hermite polynomials are generally not suitable in practice due to their wild asymptotic behavior at infinities \cite{ShenWang2008}.\\
Hermite polynomials with large $n$ can be written in direct formula as follow:
%--------------------------------------------------------------------------------
\begin{align}\nonumber
H_{n}(x)&\sim\frac{\Gamma(n+1)}{\Gamma(n/2+1)}e^{x^2/2}\cos{(\sqrt{2n+1}x-\frac{n\pi}{2})}\\\nonumber
&\sim n^{n/2}e^{x^2/2}\cos(\sqrt{2n+1}x-\frac{n\pi}{2}).
\end{align}
Hence, we shall consider the so called Hermite functions.
The normalized Hermite functions of degree $n$ is defined by
\begin{align}\nonumber
\widetilde{H}_{n}(x)=\frac{1}{\sqrt{2^n n!}}e^{-x^2/2}H_{n}(x), \quad n\geq 0, x\in \mathbb{R}.
\end{align}
Clearly, \{$\widetilde{H}_{n}$\} is an orthogonal system in $L^2(\mathbb{R})$,i.e.,
\begin{align}\nonumber
\int^{+\infty}_{-\infty}\widetilde{H}_{n}(x)\widetilde{H}_{m}(x)dx=\sqrt{\pi}\delta_{mn}.
\end{align}
where $\delta_{nm}$ is the Kronecker delta function.\\
In contrast to the Hermite polynomials, the Hermite functions are well behaved with the decay property:
\begin{align}\nonumber
|\widetilde{H}_{n}(x)|\longrightarrow 0, \quad \text{as}\quad |x|\longrightarrow \infty,
\end{align}
and the asymptotic formula with large $n$ is
\begin{align}\nonumber
\widetilde{H}_{n}(x)\sim n^{-\frac{1}{4}}\cos(\sqrt{2n+1}x-\frac{n\pi}{2})
\end{align}
The three-term recurrence relation of Hermite polynomials implies
\begin{eqnarray}\nonumber
&\widetilde{H}_{n+1}(x)=x\sqrt{\frac{2}{n+1}}\widetilde{H}_{n}(x)-\sqrt{\frac{n}{n+1}}\widetilde{H}_{n-1}(x),\quad n\geq{1},\\\nonumber
&\noindent\widetilde{H}_{0}(x)=e^{-x^2/2}, \quad \widetilde{H}_{1}(x)=\sqrt{2}xe^{-x^2/2}.
\end{eqnarray}
Using recurrence relation of Hermite polynomials and the above formula leads to
\begin{align}\nonumber
\widetilde{H}'_{n}(x)&=\sqrt{2n}\widetilde{H}_{n-1}(x)-x\widetilde{H}_{n}(x)\\\nonumber
&=\sqrt{\frac{n}{2}}\widetilde{H}_{n-1}(x)-\sqrt{\frac{n+1}{2}}\widetilde{H}_{n+1}(x).
\end{align}
and this implies
\begin{align}\nonumber
\int_{\mathbb{R}}{\widetilde{H}'_n(x)\widetilde{H}'_m(x)dx}=
\begin{cases}
-\frac{\sqrt{n(n-1)\pi}}{2}, & m=n-2, \\
\sqrt{\pi}(n+\frac{1}{2}), & m=n,\\
-\frac{\sqrt{(n+1)(n+2)\pi}}{2}, & m=n+2, \\
0, & \text{otherwise}.
\end{cases}
\end{align}
Let us define
\begin{align}\nonumber
\widetilde{P}_N:=\{u:u=e^{-x^2/2}v,\forall v\in{P_N}\}.
\end{align}
where $P_N$ is the set of all Hermite polynomials of degree at most $N$.\\
We now introduce the Gauss quadrature associated with the Hermite functions approach.\\
Let $\{x_j\}_{j=0}^{N}$ be the Hermite-Gauss nodes and define the weights
\begin{align}\nonumber
\widetilde{w}_j=\frac{\sqrt{\pi}}{(N+1)\widetilde{H}_{N}^{2}(x_j)}, \quad 0\leq j\leq N.
\end{align}
Then we have
\begin{align}\nonumber
\int_{\mathbb{R}}p(x)dx=\sum_{j=0}^{N}p(x_j)\widetilde{w}_j, \quad \forall p\in \widetilde{P}_{2N+1}.
\end{align}
For a more detailed discussion of these early developments, see the \cite{ShenTangHighOrder,ShenTangWang}.
\subsection{Approximations by Hermite Functions}
Let us define $\Lambda:=\{x|-\infty<x<\infty\}$
%We shall use $c$ to denote a generic positive constant independent of any functions,
and
\begin{align}\nonumber
\mathcal{H}_N=span\{\widetilde{H}_0(x),\widetilde{H}_1(x),...,\widetilde{H}_ N(x)\}
\end{align}
The $L^{2}{(\Lambda)}$-orthogonal projection $\tilde{\xi}_N:L^{2}{(\Lambda)}\longrightarrow \mathcal{H}_N$ is a mapping in a way that for any $v\in L^{2}{(\Lambda)}$,
\begin{align}\nonumber
<\tilde{\xi}_N{v}-v,\phi>=0, \quad \forall \phi\in \mathcal{H}_N
\end{align}
or equivalently,
\begin{align}\nonumber
\tilde{\xi}_N{v(x)}=\sum_{l=0}^{N}\tilde{v}_{l}\widetilde{H}_l(x).
\end{align}
To obtain the convergence rate of Hermite functions we define the space $H_{A}^{r}(\Lambda)$ defined by
\begin{align}\nonumber
H^{r}_{A}{(\Lambda)}=\{v|v \text{ is measurable on } \Lambda \text{ and }{\|v\|}_{r,A}<\infty\},\nonumber
\end{align}
and equipped with the norm $\|v\|_{r,A}=\|A^{r}v\|$. For any $r > 0$, the space $H^{r}_{A}{(\Lambda)}$ and its norm are defined by space interpolation. By induction, for any non-negative integer $r$,
\begin{align}\nonumber
A^{r}v(x)=\sum_{k=0}^{r}(x^2+1)^{(r-k)/2}p_k(x){\partial}_{x}^{k}v(x),
\end{align}
where $p_k(x)$ are certain rational functions which are bounded uniformly on $\Lambda$. Thus,
\begin{align}\nonumber
\|v\|_{r,A}\leq c \left(\sum_{k=0}^{r}\parallel(x^2+1)^{(r-k)/2}p_k(x)\partial_{x}^{k}v\parallel \right)^{1/2}.
\end{align}
\begin{theorem}\nonumber
For any $v \in H^{r}_{A}(\Lambda)$, $r\geq1$ and $0\leq \mu \leq r$,
\begin{equation}
\parallel{\tilde{\xi}}_{N}v-v{\parallel_{\mu}\leq cN^{1/3+(\mu-1)/2}\parallel v\parallel}_{r,A}.
\end{equation}
\end{theorem}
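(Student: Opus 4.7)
The plan is to reduce this estimate to a tail sum in the Hermite--Fourier expansion, exploiting the fact that $\{\widetilde{H}_n\}$ is an orthogonal eigensystem of the Sturm--Liouville operator $A$. Writing $v=\sum_{n\geq 0}\tilde v_n\widetilde{H}_n$, the projection error is simply the tail $\tilde\xi_N v - v = -\sum_{n>N}\tilde v_n\widetilde{H}_n$, and Parseval converts any derivative-type norm on this tail into a weighted $\ell^2$-sum whose multipliers come from the eigenvalues $\lambda_n\sim n$ of $A$ and from the raising--lowering formula $\widetilde{H}'_n=\sqrt{n/2}\,\widetilde{H}_{n-1}-\sqrt{(n+1)/2}\,\widetilde{H}_{n+1}$ recalled in Section \ref{HF.Intro}.

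First I would dispatch the $\mu=0$ case. The identity $A\widetilde{H}_n=\lambda_n\widetilde{H}_n$ gives $\|A^r v\|^2\asymp\sum_n\lambda_n^{2r}|\tilde v_n|^2$, so truncating the tail at $n=N+1$ yields
\begin{equation*}
\|\tilde\xi_N v-v\|_0^2=\sum_{n>N}|\tilde v_n|^2\leq \lambda_{N+1}^{-2r}\sum_n\lambda_n^{2r}|\tilde v_n|^2\leq cN^{-r}\|v\|_{r,A}^2.
\end{equation*}
For integer $\mu\geq 1$ I would iterate the raising--lowering identity to express $\partial_x^{\mu}\widetilde{H}_n$ as a linear combination of at most $2^{\mu}$ shifted Hermite functions with coefficients of size $n^{\mu/2}$, apply Parseval once more, and invoke the same tail bound. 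For non-integer $\mu$ the norm $\|\cdot\|_\mu$ is defined by space interpolation of the scale $H^{r}_{A}(\Lambda)$, so the intermediate estimate would follow from the integer endpoints via the standard $K$-method.

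The main obstacle, and the source of the anomalous $N^{1/3}$ factor, is the sharp pointwise control of $\widetilde{H}_n$ near its turning point $|x|=\sqrt{2n+1}$. The generic asymptotic $\widetilde{H}_n(x)\sim n^{-1/4}\cos(\sqrt{2n+1}\,x-n\pi/2)$ breaks down in that transition region, and one must instead use the Plancherel--Rotach Airy-type asymptotics, which furnish only the weaker uniform bound $\|\widetilde{H}_n\|_{L^{\infty}(\mathbb{R})}\leq c\,n^{-1/12}$. Propagating this loss through the derivative identities above is precisely what generates the $N^{1/3}$ correction to the naive spectral rate, and it is a well known feature of the Hermite family rather than a slack estimate. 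Since the bookkeeping involved is lengthy but not conceptually novel, a fully self-contained proof would essentially reproduce the analysis carried out by Guo et al., to which the author may simply appeal as was done for the previous theorem.
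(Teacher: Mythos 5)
The paper does not actually prove this theorem: as with the preceding convergence theorem, the statement is discharged by a one-line citation to Guo, Shen and Xu and to Shen and Wang, so there is no in-paper argument to measure yours against. Judged on its own terms, your sketch gets the routine half right and leaves the part that matters unproved. The $\mu=0$ step is the standard Parseval tail bound, though your bookkeeping is internally inconsistent: if $\lambda_n\sim n$ then $\lambda_{N+1}^{-2r}\sim N^{-2r}$, not the $N^{-r}$ you wrote. With the paper's definition of $A^r$ as an $r$-th order operator with weights $(x^2+1)^{(r-k)/2}$, the correct statement is $\|v\|_{r,A}^2\asymp\sum_n n^{r}|\tilde v_n|^2$, which yields $\|\tilde\xi_Nv-v\|\leq cN^{-r/2}\|v\|_{r,A}$, and iterating the ladder identity $\widetilde H_n'=\sqrt{n/2}\,\widetilde H_{n-1}-\sqrt{(n+1)/2}\,\widetilde H_{n+1}$ gives $N^{(\mu-r)/2}$ for integer $\mu$. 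These are clean and correct estimates, but they are not the inequality in the statement.

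The genuine gap is that the exponent $1/3+(\mu-1)/2$ is never derived. Your proposed source of the $N^{1/3}$ --- Plancherel--Rotach turning-point asymptotics and the uniform bound $\|\widetilde H_n\|_{L^{\infty}}\leq c\,n^{-1/12}$ --- is indeed where such fractional powers enter Hermite \emph{interpolation} and Gauss-quadrature estimates, but it plays no role in bounding the $L^2$-orthogonal projection error, which is controlled entirely by Parseval and the recurrence relations; asserting that ``propagating this loss generates the $N^{1/3}$'' without any computation is not a proof step, and the sketch then hands the discrepancy back to Guo et al. You should also flag that the printed exponent $(\mu-1)/2$ is almost certainly a misprint for $(\mu-r)/2$ (otherwise the dependence of the right-hand side on $r$ is vacuous), so the estimate your Parseval argument actually produces is closer to what the cited literature proves than to what the paper states; a careful write-up would either prove the corrected statement or trace the precise theorem of Guo, Shen and Xu being quoted.
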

{\it Proof}. A complete proof is given by Guo et al. \cite{GuoShenXu2003}. Also same theorem has been proved by Shen et al. \cite{ShenWang2008}.

\section{Solving Volterra's Population Model}\label{Voltera.Solve}
In this section, we study an algorithm for solving Volterra's population model by using the collocation method based on rational Chebyshev and Hermite functions. We first convert
Volterra's population model in Eq. (\ref{VolterraPopulationInt.}) to an equivalent nonlinear
ordinary differential equation. Let
\begin{equation}\label{EqInt.}
y(t)=\int_0^t u(x) dx.
\end{equation}
 This leads to
\begin{align}\label{EqsdiffInt.}
y'(t) = u(t), y''(t) = u'(t).
\end{align}
Inserting  Eq. (\ref{EqInt.}) and Eq. (\ref{EqsdiffInt.}) into Eq. (\ref{VolterraPopulationInt.})
yields the nonlinear differential equation
\begin{equation}\label{Main.voltra}
\kappa y''(t)=y'(t)-(y'(t))^2-y(t)y'(t),
\end{equation}
with the initial conditions
\begin{align}\label{voltraBoundry}
y(0) =&  0,\\ \nonumber
y'(0) =&  u_0,
\end{align}
that were obtained by using  Eq. (\ref{EqInt.}) and Eq. (\ref{EqsdiffInt.}) respectively.\\
We are going to solve this model for $u_0=0.1$ and various $\kappa=0.02,~0.04,~0.1,~0.2$ and $~0.5$.
%\section{Solution of Volterra Population Equation}
\subsection{Solving Volterra's Population Model by Rational Chebyshev Functions}
In the first step of our analysis, we apply $P_{N}$ operator on the function $y(t)$ as follows:
\begin{equation}
P_{N}y(t)=\sum_{k=0}^{N}a_{k}R_{k}(t)
\end{equation}
Then, we construct the residual function by substituting $y(t)$ by $P_{N}y(t)$ in
the volterra's population in Eq. (\ref{Main.voltra}):
\begin{align}
\nonumber Res(t)=&\kappa \frac{d^2}{dt^2}P_{N}y(t)-\frac{d}{dt}P_{N}y(t)+\left({\frac{d}{dt}P_{N}y(t)}\right)^2\\
&+\left( P_{N}y(t)\right)\left(\frac{d}{dt}P_{N}y(t)\right),
\end{align}
The equations for obtaining the coefficients $a_k$s come from equalizing $Res(t)$
to zero at rational Chebyshev-Gauss-Radau points plus two boundary conditions:
\begin{align}
\begin{cases}
Res(x_j)=0, \quad j=1,2,...,N-1,\\
P_{N}y(0)=0,\\
\frac{d}{dx}P_{N}y(t){\Big |}_{t=0}=u_{0}.
\end{cases}
\end{align}
Solving the set of equations we have the approximating function $P_{N}y(x)$.

\subsection{Solving Volterra's Population Model by Hermite functions}
As mentioned before, Volterra's Population Model is defined on the interval $(0,+\infty)$; but we know the properties of Hermite functions are derived in the infinite domain $(-\infty,+\infty)$.\\
Also we know approximations can be constructed for infinite, semi-infinite and finite intervals.
One of the approaches to construct approximations on the interval $(0,+\infty)$ which is used in this paper, is the use of mapping, that is a change of variable of the form
\begin{equation}\nonumber
\omega=\phi(z)=\ln(\sinh(kz)).
\end{equation}
where $k$ is a constant.\\
The basis functions on $(0,+\infty)$ are taken to be the transformed Hermite functions,
\begin{align}\nonumber
\widehat{H}_n(x)\equiv \widetilde{H}_n(x)\circ \phi(x)= \widetilde{H}_n(\phi(x)).
\end{align}
where $\widetilde{H}_n(x)\circ \phi(x)$ is defined by $\widetilde{H}_n(x)(\phi(x))$. The inverse map of $\omega=\phi(z)$ is
\begin{align}\label{inverseTransform}
z=\phi^{-1}(\omega)=\frac{1}{k}\ln (e^\omega +\sqrt{e^{2\omega}+1}).
\end{align}
Thus we may define the inverse images of the spaced nodes ${ {\{{x_j}}\}_{x_j=-\infty}^{x_j=+\infty} }$ as
\begin{align}\nonumber
\Gamma=\{\phi^{-1}(t): -\infty< t <+\infty\}=(0,+\infty)
\end{align}
and
\begin{align}\nonumber
\tilde{x}_j=\phi^{-1}(x_j)=e^{k{x_j}},\quad j=0,1,2,...
\end{align}
Let $w(x)$ denotes a non-negative, integrable, real-valued function over the interval $\Gamma$.
We define
\begin{align}\nonumber
L^2_w(\Gamma)=\{v:\Gamma\rightarrow \mathbb{R}\mid v \textrm{ is measurable and}\parallel v{\parallel}_w<\infty \}
\end{align}
where
\begin{align}\nonumber
\parallel v{\parallel}_w=\left(\int_{0}^\infty\mid v(x)\mid ^2w(x)\mathrm{d}x\right)^{\frac{1}{2}},
\end{align}
is the norm induced by the inner product of the space $L^2_w(\Gamma)$,
\begin{equation}\label{Eq.inner product definition}
<u,v>_w=\int_{0}^{\infty}u(x)v(x)w(x)\mathrm{d}x.
\end{equation}
Thus $\{\widehat{H}_n(x)\}_{n\in \mathbb{N}}$ denotes a system which is mutually orthogonal under (\ref{Eq.inner product definition}), i.e.,
\begin{align}\nonumber
< \widehat{H}_n(x),\widehat{H}_m(x)>_{w(x)}=\sqrt{\pi}\delta_{nm},
\end{align}
where $w(x)=\coth(x)$ and $\delta_{nm}$ is the Kronecker delta function. This system is complete in $L^2_w(\Gamma)$. For any function $f\in L^2_w(\Gamma)$ the following expansion holds

\begin{align}\nonumber
f(x)\cong \sum_{k=-N}^{+N}f_k \widehat{H}_k(x),
\end{align}
%\begin{align}\nonumber
%{\int_{0}^\infty} f(x)\cong \sum_{k=-N}^{+N}\frac{f_k}{\phi'(x)},
%\end{align}
with
\begin{align}\nonumber
f_k=\frac{<f(x),\widehat{H}_k(x)>_{w(x)}}{\parallel \widehat{H}_k(x){\parallel}_{w(x)}^2}.
\end{align}
Now we can define an orthogonal projection based on transformed Hermite functions as below:\\
Let
\begin{align}\nonumber
\widehat{\mathcal{H}}_N=span\{\widehat{H}_0(x),\widehat{H}_1(x),...,\widehat{H}_n(x)\}
\end{align}
The $L^{2}{(\Gamma)}$-orthogonal projection $\hat{\xi}_N:L^{2}{(\Gamma)}\longrightarrow\widehat{\mathcal{H}}_N$ is a mapping in a way that for any $y\in L^{2}{(\Gamma)}$,
\begin{align}\nonumber
<\hat{\xi}_N{y}-y,\phi>=0, \quad \forall \phi\in \widehat{\mathcal{H}}_N
\end{align}
or equivalently,
\begin{align}\label{operatorHFT}
\hat{\xi}_N{y(x)}=\sum_{i=0}^{N}\hat{a}_{i}\widehat{H}_i(x).
\end{align}

%\bigskip

Then to apply the HFC method to approximate the Volterra's Population Eq.(\ref{Main.voltra}) with initial conditions (\ref{voltraBoundry}), we use the operator introduced in Eq.(\ref{operatorHFT}) basically.

We note that the Hermite functions are not differentiable at the point $x=0$, therefore to approximate the solution of Eq. (\ref{Main.voltra}) with the initial conditions Eq. (\ref{voltraBoundry}) we construct a polynomial $p(t)$ that satisfy $y'(0)=u_0$ in Eq. (\ref{voltraBoundry}) and also multiply operator Eq. (\ref{operatorHFT}) by $t$ to make it differentiable at point $t=0$ in Eq. (\ref{voltraBoundry}).
This polynomial is given by
\begin{equation}\label{px}
p(t)=\lambda{t^2}+0.1t,
\end{equation}
where $\lambda$ is constant to be determined.\\

Therefore, the approximate solution of $y(t)$, in Eq. (\ref{Main.voltra}) with initial conditions Eq. (\ref{voltraBoundry}) is represented by
\begin{align}\label{operatorMHFT}
\widehat{\xi}_{N}y(t)= p(t)+t\hat{\xi}_Ny(t),
\end{align}

Now we construct the residual function by substituting $y(t)$ by $\widehat{\xi}_{N}y(t)$ in
the Volterra's population Eq. (\ref{Main.voltra}):
\begin{align}
\nonumber Res_{l}(t)=&\kappa \frac{d^2}{dx^2}\widehat{\xi}_{N}y(t/l)-\frac{d}{dx}\widehat{\xi}_{N}y(t/l)+\left({\frac{d}{dx}\widehat{\xi}_{N}y(t/l)}\right)^2\\
& +\left( \widehat{\xi}_{N}y(t/l)\right)\left(\frac{d}{dx}\widehat{\xi}_{N}y(t/l)\right),
\end{align}
where $l$ is a constant that is called domain scaling.

It has already been mentioned in \cite{Liu.Liu.Tang} that when using a spectral approach on the whole real line $\mathbb{R}$ one can possibly increase the accuracy of the computation by a suitable scaling of the underlying time variable $t$. For example, if $y$ denotes a solution of the ordinary differential equation, then the rescaled function is
$\tilde{y}(t)=y(t/l)$, where $l$ is constant. Domain scaling is used in several of the applications presented in the next section. For more detail we refer the reader to \cite{Tang.1993}.

The equations for obtaining the coefficients $a_i$s come from equalizing $Res(t)$
to zero at transformed Hermite-Gauss points:
\begin{align}
Res(x_j)=0, \quad j=0,\ldots,N+1,
\end{align}
where the $x_j$s are $N+2$ transformed Hermite-Gauss nodes by Eq. (\ref{inverseTransform}).
This generates a set of $N+2$ nonlinear equations that can be solved by Newton method for unknown coefficients $a_i$s and $\lambda$.

Table \ref{Tab.Results} Shows a comparison of methods in \cite{Parand.Hojati,Ramezani.Razzaghi}, and the present methods with the exact values
\begin{equation}\nonumber
u_{max}=1+\kappa\ln\left(\frac{\kappa}{1+\kappa-u_0}\right)
\end{equation}
evaluated in \cite{Tebeest}.

Figures \ref{Fig-RCC} and \ref{Fig-HFC} show the results of rational Chebyshev and Hermite functions collocation methods for $\kappa$= $0.02$, $0.04$, $0.1$, $0.2$, $0.5$. These figures show the rapid rise along the logistic curve followed by the slow exponential decay after reaching the maximum point and when $\kappa$ increases, the amplitude of $u(t)$ decreases whereas the exponential decay increases.

Figure \ref{Fig-HFC-RCC-k0.02} illustrates a comparison between the two presented methods for $\kappa=0.02$.

Logarithmic graphs of absolute coefficients $|a_i|$ of rational Chebyshev and Hermite functions in the approximate solutions for $\kappa=0.5$ are shown in Figures \ref{Fig-RCC-k05-Coeff} and \ref{Fig-HFC-k05-Coeff}, respectively. These graphs illustrate that both of methods have an appropriate convergence rate.

\section{Conclusions}
The aim of the this study is to develop an efficient and accurate numerical method based on orthogonal functions for solving the Volterra model for the population growth of a species in a closed system. The methods were used in a direct way on a semi-infinite domain without using linearization, perturbation or restrictive assumptions. In this paper, we have applied both the rational Chebyshev and the Hermite functions in solving nonlinear integro-differential equations and compared the results obtained by the two methods and others reported in \cite{Parand.Hojati,Ramezani.Razzaghi}. The study showed that rational Chebyshev collocation method is simple and easy to use. It also minimizes the computational results.
In total an important concern of spectral methods is the choice of basis functions; the basis functions have three properties: easy to computation, rapid convergence and completeness, which means that any solution can be represented. The stability and convergence of rational Chebyshev and Hermite functions approximations make this approach very attractive and contributing to the good agreement between the approximate and exact values for $u_{max}$ in the numerical example.

\section*{Acknowledgments}
The first author's research (K. Parand) was supported by a grant from Shahid Beheshti University.

\clearpage{}
\begin{table}
\caption{A comparison of methods in
\cite{Parand.Hojati,Ramezani.Razzaghi} and the present methods
with the exact  values for $u_{max}$}
\begin{tabular*}{\columnwidth}{@{\extracolsep{\fill}}*{8}{c}}
\hline
\multicolumn{2}{c}{}& \multicolumn{4}{c}{Present methods}& \multicolumn{2}{c}{Other methods}\\
\cline{3-6} \cline{7-8}
$\kappa$ & Exact $u_{max}$ & N &HFC & N & RCC & SDMM \cite{Parand.Hojati} & CSF with $N=100$ \cite{Ramezani.Razzaghi} \\
\hline
$0.02$ & $0.92342717$ & $20$& $0.92342704$ & $14$& $0.92342715$ & $0.92342714$ & $0.9234262$\\
$0.04$ & $0.87371998$ & $25$& $0.87371998$ & $14$& $0.87371998$ & $0.87381998$ & $0.8737192$\\
$0.1$  & $0.76974149$ & $20$& $0.76974149$ & $14$& $0.76974149$ & $0.76974140$  &$0.7697409$\\
$0.2$  & $0.65905038$ & $25$& $0.65905038$ & $11$& $0.65905038$ & $0.65905037$ & $0.6590497$\\
$0.5$  & $0.48519030$ & $30$& $0.48519030$ & $13$& $0.48519030$ & $0.48519029$ & $0.4851898$\\
\hline
\end{tabular*}
\label{Tab.Results}
\end{table}
\clearpage{}

\begin{figure}
\includegraphics[scale=0.4]{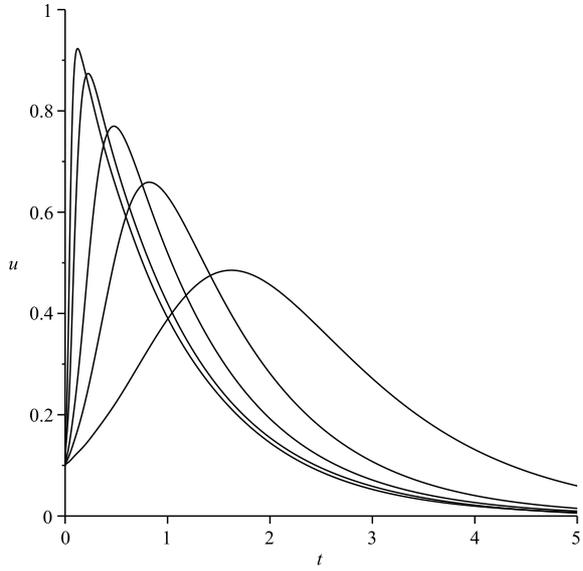}
\caption{The results of rational Chebyshev collocation method calculation for
$\kappa=0.02, 0.04, 0.1, 0.2, 0.5$, in the order of height}
\label{Fig-RCC}
\end{figure}

\begin{figure}
\includegraphics[scale=0.4]{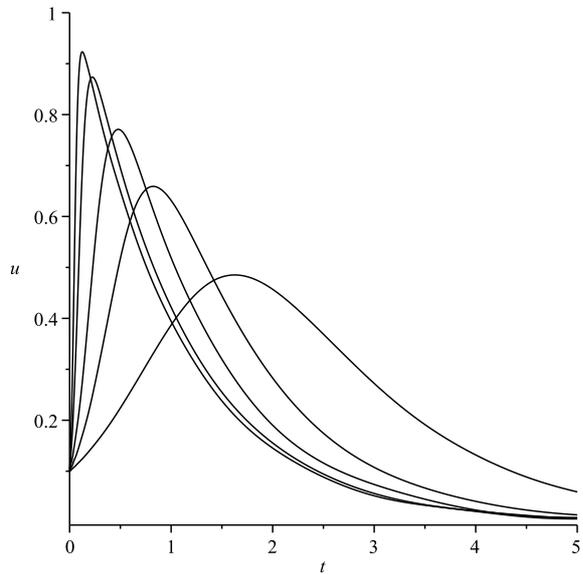}
\caption{The results of Hermite function collocation method calculation for
$\kappa=0.02, 0.04, 0.1, 0.2, 0.5$, in the order of height}
\label{Fig-HFC}
\end{figure}
\clearpage{}
\begin{figure}
\includegraphics[scale=0.4]{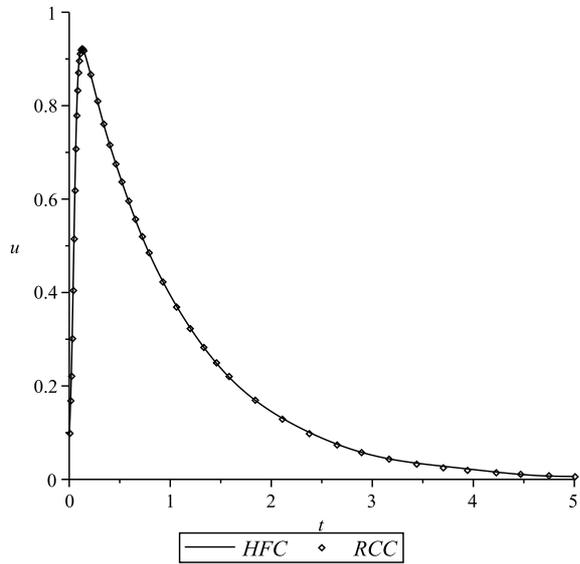}
\caption{The results of the comparison of rational Chebyshev and Hermite functions collocation method for $\kappa=0.02$}
\label{Fig-HFC-RCC-k0.02}
\end{figure}
\clearpage{}
\begin{figure}
\includegraphics[scale=0.4]{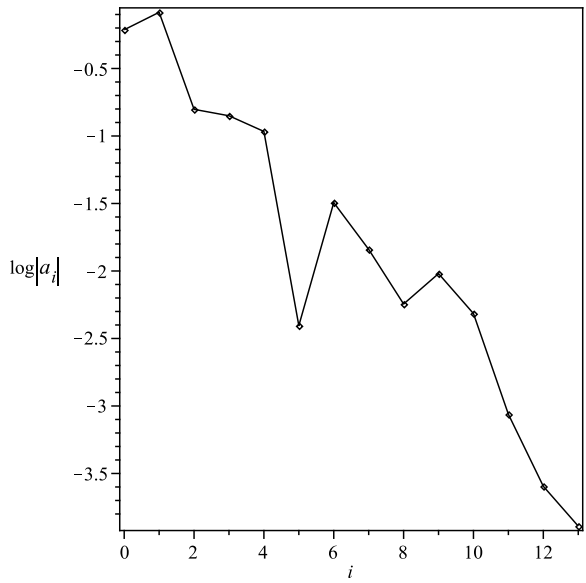}
\caption{The Logarithmic graph of absolute coefficients $|a_i|$ of rational Chebyshev functions in the approximate solution for $\kappa=0.5$}
\label{Fig-RCC-k05-Coeff}
\end{figure}

\begin{figure}
\includegraphics[scale=0.4]{}
\caption{The Logarithmic graph of absolute coefficients $|a_i|$ of Hermite functions in the approximate solution for $\kappa=0.5$}
\label{Fig-HFC-k05-Coeff}
\end{figure}

\end{document}